\newtheorem{prop}{Proposition}
\newtheorem{lem}[prop]{Lemma}
\newtheorem{thm}[prop]{Theorem}
\theoremstyle{definition}
\newtheorem{defi}[prop]{Definition}
\newtheorem{ex}[prop]{Example}
\newtheorem{rem}[prop]{Remark}
\newcommand{\F}{\mathbb{F}}
\newcommand{\rk}{\mathrm{rank}}
\newcommand{\rs}{\mathrm{rs}}
\newcommand{\qdeg}{\mathrm{qdeg}}
\newcommand{\Lp}{\mathcal{L}_q(x,q^m)}
\title{List-Decoding Gabidulin Codes via Interpolation and the Euclidean Algorithm}
\author{\IEEEauthorblockN{Margreta Kuijper and Anna-Lena Trautmann\thanks{ALT is also with the Department of Electrical and Computer Systems Engineering, Monash University. She was supported by Swiss National Science Foundation Fellowship no. 147304.}} \IEEEauthorblockA{Department of Electrical and Electronic Engineering, University of Melbourne, Australia.}}
\begin{document}

\maketitle

\begin{abstract}
We show how Gabidulin codes can be list decoded by using a parametrization approach. For this we consider a certain module in the ring of linearized polynomials and find a minimal basis for this module using the Euclidean algorithm with respect to composition of polynomials. For a given received word, our decoding algorithm computes a list of all codewords that are closest to the received word with respect to the rank metric.
\end{abstract}

\section{Introduction}

Gabidulin codes are a family of optimal rank-metric codes, useful in different fields of coding theory, e.g.\ in (random) linear network coding \cite{si08j}, space-time coding \cite{lu03}, crisscoss error correction \cite{ro91} and distributed storage \cite{si12}. They were first derived by Gabidulin in \cite{ga85a} and independently by Delsarte in \cite{de78}. 
These codes can be seen as the $q$-analog of Reed-Solomon codes, using $q$-linearized polynomials instead of arbitrary polynomials over the finite field $\F_q$ (where $q$ is a prime power). They are optimal in the sense that they are not only MDS codes with respect to the Hamming metric, but also achieve the Singleton bound with respect to the rank metric and are thus MRD codes.

There has been a rising interest in the last decade due to their application in network coding \cite{ko08,si08j}. Since then a lot of work has been done on how to decode these codes. The question of minimum distance decoding inside the unique decoding radius has been addressed e.g.\ in  \cite{ga85a,ga92,lo06,ri04p,si10,si11,si09p}, whereas the more general setting of list decoding, beyond the unique decoding radius, is investigated in e.g.\ \cite{lo06p,ma12,wa13p,xi11}. Related work on list-decoding lifted Gabidulin codes can be found in \cite{tr13p}.

In this work we explore list decoding further and, in contrast to the Sudan-Guruswami approach of \cite{ma12,wa13p},  present a parametric approach analogous to the one for list decoding Reed-Solomon codes from \cite{al11}. In a similar way as \cite{lo06} we use interpolation, however unlike \cite{lo06} we perform list decoding rather than unique decoding. A difference between our paper and the papers \cite{lo06,xi11} is that our approach is based on the Euclidean algorithm. A more important difference with \cite{xi11} is that our decoding method yields all closest codewords, rather than just one. The latter is due to our parametrization approach.

The paper is structured as follows: In the following section we introduce $q$-linearized polynomials, Gabidulin codes, the rank metric and state some known properties of those. Moreover we explain the error span polynomial and recall the interpolation based unique decoding set-up for Gabidulin codes from \cite{lo06}. In Section \ref{sec:interpolation} we derive the module of $q$-linearized polynomials containing all those polynomials that interpolate the received word and show that finding all elements of this module fulfilling certain requirements is equivalent to list decoding with respect to the rank metric. In Section \ref{sec:algorithm} we describe a list decoding algorithm based on the previously described interpolation module using the Euclidean algorithm for $q$-linearized polynomials. We conclude this paper in Section \ref{sec:conclusion}.


\section{Preliminaries}

Let $q$ be a prime power and let $\F_q$ denote the finite field with $q$ elements. It is well-known that there always exists a primitive element $\alpha$ of the extension field $\F_{q^m}$, such that $\F_{q^m}\cong \F_q[\alpha] $. Moreover, $\F_{q^m}$  is isomorphic (as a vector space) to the vector space $\F_q^m$. If not noted differently we will use the isomorphism
\begin{align*}
\F_q^m &\longrightarrow \F_{q^m}\cong \F_q[\alpha]  \\
(v_1, \dots, v_m) &\longmapsto \sum_{i=1}^m v_i \alpha^{i-1}    .
\end{align*}
One then easily gets the isomorphic description of matrices over the base field $\F_q$ as vectors over the extension field, i.e.\ $\F_q^{m\times n}\cong \F_{q^m}^n$. Since we will work with matrices over different underlying fields we denote the rank of a matrix $X$ over $\F_q$ by $\rk_q(X)$.

For some vector $(v_1,\dots, v_n) \in \F_{q^m}^n$ we denote the $k \times n$ \emph{Moore matrix} by
\[M_k(v_1,\dots, v_n) := \left( \begin{array}{cccc}  v_1 & v_2 &\dots &v_n \\ v_1^{[1]} & v_2^{[1]} &\dots &v_n^{[1]} \\ \vdots \\  v_1^{[k-1]} & v_2^{[k-1]} &\dots &v_n^{[k-1]} \end{array}\right)   ,\]
where $[i]:= q^i$. A \emph{$q$-linearized polynomial} over $\F_{q^m}$ is defined to be of the form
\[f(x) = \sum_{i=0}^{n} a_i x^{[i]}   \quad, \quad a_i \in\F_{q^m} , \]
where $n$ is called the \emph{$q$-degree} of $f(x)$, assuming that $a_n\neq 0$, denoted by $\qdeg (f)$. This class of polynomials was first studied by Ore in \cite{or33}. 
One can easily check that $f(x_1 + x_2)= f(x_1)+f(x_2)$ and $f(\lambda x_1) = \lambda f(x_1)$ for any $x_1,x_2 \in \F_{q^m}$ and $\lambda \in \F_q$, hence the name \emph{linearized}. The set of all $q$-linearized polynomials over $\F_{q^m}$ is denoted by $\Lp$. This set forms a non-commutative ring with the normal addition $+$ and composition $\circ$ of polynomials. 
Because of the non-commutativity, products and quotients of elements of $\Lp$ have to be specified as being "left" or "right" products or quotients. To not be mistaken with the standard division, we call the inverse of the composition \emph{symbolic division}. I.e.\ $f(x)$ is symbolically divisible on the right by $g(x)$ with quotient $m(x)$ if $$ g(x) \circ m(x) = g(m(x)) = f(x).$$
Efficient algorithms for all these operations (left and right symbolic multiplication and division) exist and can be found e.g.\ in \cite{ko08}.

\begin{lem}[cf.\ \cite{li97b} Thm. 3.50]\label{lem:rootspace}
Let $f(x) \in \Lp$ and $\F_{q^s}$ be the smallest extension field of $\F_{q^m}$ that contains all roots of $f(x)$. Then the set of all roots of $f(x)$ forms a $\F_q$-linear vector space in $\F_{q^s}$.
\end{lem}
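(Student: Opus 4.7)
The plan is to verify directly that the set $V$ of roots of $f(x)$ in $\F_{q^s}$ is closed under addition and under scalar multiplication by $\F_q$, and contains $0$; these are exactly the axioms for an $\F_q$-linear subspace of $\F_{q^s}$.

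First I would observe that the two linearity identities already recorded in the preliminaries, namely $f(x_1+x_2)=f(x_1)+f(x_2)$ and $f(\lambda x_1)=\lambda f(x_1)$ for $\lambda \in \F_q$, are proved via the Frobenius identity $(x_1+x_2)^{q^i}=x_1^{q^i}+x_2^{q^i}$ (valid in any field of characteristic $p$, where $q$ is a power of $p$) and the fact that $\lambda^{q^i}=\lambda$ whenever $\lambda\in\F_q$. Therefore these identities hold not only for $x_1,x_2\in\F_{q^m}$ but for arbitrary $x_1,x_2$ in any extension of $\F_q$, in particular in $\F_{q^s}$. This is the one technical point where I want to be careful, since the statement in the preliminaries is phrased only for $\F_{q^m}$; once I extend it to $\F_{q^s}$ the rest is routine.

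Next, I would note that $f(0)=0$ (every summand $a_i\cdot 0^{[i]}$ vanishes), so $0\in V$. For closure under addition, given $\beta_1,\beta_2\in V\subseteq\F_{q^s}$, applying the extended additivity identity yields $f(\beta_1+\beta_2)=f(\beta_1)+f(\beta_2)=0+0=0$, hence $\beta_1+\beta_2\in V$. For closure under $\F_q$-scalar multiplication, given $\beta\in V$ and $\lambda\in\F_q$, the scalar identity gives $f(\lambda\beta)=\lambda f(\beta)=\lambda\cdot 0=0$, so $\lambda\beta\in V$.

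Combining these three facts shows that $V$ is an $\F_q$-linear subspace of $\F_{q^s}$, completing the proof. I do not expect any genuine obstacle; the only subtlety is being explicit that the linearity identities survive when we pass from $\F_{q^m}$ to the splitting field $\F_{q^s}$, which is immediate from the characterisation of the Frobenius map on any field of characteristic $p$.
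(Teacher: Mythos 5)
Your proof is correct. The paper does not actually prove this lemma --- it is quoted from Lidl--Niederreiter (Thm.~3.50) without proof --- and your argument is precisely the standard one given there: additivity of the Frobenius map and the fact that $\lambda^{[i]}=\lambda$ for $\lambda\in\F_q$ make $f$ an $\F_q$-linear map on any extension field of $\F_q$ containing its coefficients, so its kernel in $\F_{q^s}$ is an $\F_q$-subspace. Your care in noting that the linearity identities, stated in the preliminaries only over $\F_{q^m}$, extend verbatim to $\F_{q^s}$ is exactly the right (and only) point needing attention.
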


\begin{lem}[\cite{li97b} Thm. 3.52]\label{lem:nullpoly}
Let $U$ be a $\F_q$-linear subspace of $\F_{q^m}$. Then 
\( \prod_{\beta \in U} (x-\beta)\)
is an element of $\Lp$.
\end{lem}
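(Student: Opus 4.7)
The plan is to induct on $k := \dim_{\F_q} U$. The base case $k = 0$ is immediate: $\prod_{\beta \in \{0\}}(x - \beta) = x = x^{[0]} \in \Lp$. For the inductive step, I would pick any codimension-one subspace $U' \subset U$ together with some $\gamma \in U \setminus U'$. Since $U/U'$ is one-dimensional over $\F_q$, the space $U$ decomposes as the disjoint union $\bigsqcup_{c \in \F_q}(c\gamma + U')$, and therefore
\[ \prod_{\beta \in U}(x - \beta) = \prod_{c \in \F_q}\prod_{\delta \in U'}(x - c\gamma - \delta). \]

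Set $f(x) := \prod_{\delta \in U'}(x - \delta)$. By the inductive hypothesis $f \in \Lp$, so $f$ is $\F_q$-linear as a map on $\F_{q^m}$, which lets me rewrite each inner product as $f(x - c\gamma) = f(x) - c\,f(\gamma)$. Writing $\eta := f(\gamma)$, which is nonzero because the roots of $f$ are precisely $U'$ (Lemma~\ref{lem:rootspace}) and $\gamma \notin U'$, the whole expression collapses to
\[ \prod_{\beta \in U}(x - \beta) = \prod_{c \in \F_q}\bigl(f(x) - c\eta\bigr). \]
The key algebraic identity is then $\prod_{c \in \F_q}(y - c\eta) = y^q - \eta^{q-1} y$, which holds because both sides are monic polynomials of degree $q$ in $y$ with common root set $\eta\F_q$. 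Substituting $y = f(x)$ yields $f(x)^{q} - \eta^{q-1} f(x)$. The term $\eta^{q-1}f(x)$ lies in $\Lp$ trivially, and $f(x)^q = \sum_i a_i^q x^{[i+1]}$ by characteristic-$p$ Frobenius applied termwise, so $f(x)^q \in \Lp$ as well; the difference is therefore in $\Lp$, closing the induction.

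The main obstacle is recognizing the right coset-product identity: everything hinges on noticing that $\prod_{c \in \F_q}(y - c\eta)$ is itself a $q$-linearized binomial in $y$. Once that identity is in hand, the induction is purely formal. A secondary point that must be checked, but which is routine, is that the $q$-th power of a $q$-linearized polynomial remains $q$-linearized; this uses only that $q$ is a power of the characteristic, so the Frobenius distributes over sums.
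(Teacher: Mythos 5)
Your argument is correct and complete. Note, though, that the paper does not prove this lemma at all: it is quoted from Lidl--Niederreiter (Thm.\ 3.52), so you have supplied a self-contained proof of a cited fact. Your route is the classical coset induction: split $U$ into the $q$ cosets of a hyperplane $U'$, use the inductive hypothesis to collapse each coset factor to $f(x)-c\eta$ with $f(x)=\prod_{\delta\in U'}(x-\delta)$ and $\eta=f(\gamma)\neq 0$, and then invoke the identity $\prod_{c\in\F_q}(y-c\eta)=y^q-\eta^{q-1}y$ to see that the result is $f(x)^q-\eta^{q-1}f(x)\in\Lp$. All the steps check out: the nonvanishing of $\eta$ follows simply from $\deg f=|U'|$ having exactly the roots $U'$, the coset identity holds because both sides are monic of degree $q$ with root set $\eta\F_q$, and $f(x-c\gamma)=f(x)-cf(\gamma)$ is a genuine polynomial identity (additivity of $q$-linearized polynomials holds formally via the Frobenius, not merely as maps on $\F_{q^m}$ --- a point worth stating explicitly since the conclusion is about polynomials, not functions). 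The paper's remark immediately after the lemma points to the alternative, non-inductive route: $\prod_{\beta\in U}(x-\beta)=\lambda\det(M_{t+1}(\beta_1,\dots,\beta_t,x))$ for a basis $\beta_1,\dots,\beta_t$ of $U$, and cofactor expansion of the Moore determinant along its last column exhibits the product directly as an $\F_{q^m}$-linear combination of $x^{[0]},\dots,x^{[t]}$. That approach is shorter once the Moore determinant identity is available, but your induction is more elementary and makes the mechanism (closure of $\Lp$ under $q$-th powers and the coset factorization) transparent.
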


Note that, if $\beta_1,\dots,\beta_t$ is a basis of $U$, one can rewrite 
$$ \prod_{\beta \in U} (x-\beta) = \lambda \det(M_{t+1}(\beta_1,\dots,\beta_t,x))$$ 
for some constant $\lambda\in\F_{q^m}$.

Let $g_1,\dots, g_n \in \F_{q^m}$ be linearly independent over $\F_q$. We define a \emph{Gabidulin code} $C\subseteq \F_{q^m}^{n}$ as the linear block code with generator matrix $M_k(g_1,\dots, g_n)$.        
Using the isomorphic matrix representation we can interpret $C$ as a matrix code in $\F_q^{m\times n}$.The \emph{rank distance} $d_R$ on  $\F_q^{m\times n}$ is defined by
\[d_R(X,Y):= \rk_q(X-Y) \quad, \quad X,Y \in \F_q^{m\times n} \]
and analogously for the isomorphic extension field representation. 
It holds that the code $C$ constructed before has dimension $k$ over $\F_{q^m}$ and minimum rank distance (over $\F_q$) $n-k+1$. One can easily see by the shape of the parity check and the generator matrices that an equivalent definition of the code is
\[C =  \{(f(g_1),\dots,f(g_n))\in \F_{q^m}^n \mid f(x) \in \Lp_{<k}  \} ,\]
where $\Lp_{<k} := \{f(x) \in \Lp, \qdeg(f(x)) < k\}$. 
For more information on bounds and constructions of rank-metric codes the interested reader is referred to \cite{ga85a}.

Consider a received word $\mathbf r = (r_1,\dots,r_n) \in \F_{q^m}^n$ as the sum $\mathbf r = \mathbf c + \mathbf e$, where $\mathbf c = (c_1,\dots,c_n)\in C$ is a codeword and $\mathbf e = (e_1,\dots,e_n)\in \F_{q^m}^n$ is the error vector. The following statement was formulated in a similar, but less general, manner in Theorem $1$ in \cite{lo06}. 
\begin{thm}\label{thm2}
Let $f(x)\in \Lp, \qdeg(f(x))< k$ and $c_i=f(g_i)$ for $i=1,\dots,n$.
It holds that $d_R(\mathbf c, \mathbf r) = t$ if and only if there exists a $D(x) \in \Lp$, such that $ \qdeg(D(x))= t$ and
\[D(r_i) = D(f(g_i)) \quad \forall i\in\{1,\dots,n\}.\]
Furthermore, this $D(x)$ is unique.
\end{thm}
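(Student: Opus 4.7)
My plan is to translate the interpolation-difference condition on $D$ into an annihilation condition on the error vector, and then construct the required $D$ via Lemma \ref{lem:nullpoly}. Since every $q$-linearized polynomial is $\F_q$-linear and in particular additive, the identity $D(r_i) = D(f(g_i))$ is equivalent to $D(e_i) = 0$, where $e_i := r_i - f(g_i)$ is the $i$-th error coordinate. Setting $U := \langle e_1, \ldots, e_n \rangle_{\F_q} \subseteq \F_{q^m}$, the definition of rank distance via $\F_{q^m} \cong \F_q^m$ gives $d_R(\mathbf{c}, \mathbf{r}) = \dim_{\F_q} U$. So the theorem reduces to showing that the minimum $q$-degree among nonzero $D \in \Lp$ vanishing on $U$ equals $\dim_{\F_q} U$, and that the minimiser is unique up to a nonzero scalar.

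For the forward direction I would take $\dim_{\F_q} U = t$ and set $D(x) := \prod_{\beta \in U}(x - \beta)$. Lemma \ref{lem:nullpoly} gives $D \in \Lp$, its ordinary degree is $|U| = q^t$ so $\qdeg(D) = t$, and $D$ vanishes on every $\beta \in U$, in particular on each $e_i$, delivering $D(r_i) = D(f(g_i))$.

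For the reverse direction together with uniqueness, suppose $D \in \Lp$ has $\qdeg(D) = t$ and $D(e_i) = 0$ for all $i$. Lemma \ref{lem:rootspace} shows the root set $V$ of $D$ in a suitable extension is an $\F_q$-subspace; since $D$ has ordinary degree $q^t$ we have $|V| \le q^t$, so $\dim_{\F_q} V \le t$. Because $U \subseteq V$, this forces $\dim_{\F_q} U \le t$; and if equality holds then $|U| = q^t = \deg D$ forces $V = U$ with every root simple, whence $D = \lambda \prod_{\beta \in U}(x - \beta)$ for some $\lambda \in \F_{q^m}^\times$, giving uniqueness up to a scalar. The main subtlety, which I would flag as the main obstacle, is the literal reading of the iff: when $\dim_{\F_q} U < t$ one can still produce many $D$'s of $q$-degree $t$ by composing the canonical annihilator of $U$ with an arbitrary $q$-linearized polynomial of $q$-degree $t - \dim_{\F_q} U$. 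The equivalence must therefore be read jointly with the uniqueness clause, and this is exactly what the construction above delivers: $d_R(\mathbf{c}, \mathbf{r}) = t$ iff there is an essentially unique $D \in \Lp$ of $q$-degree $t$ satisfying the stated identity.
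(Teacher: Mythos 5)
Your argument is correct and rests on the same two pillars as the paper's proof --- Lemma \ref{lem:rootspace} to control the root space and Lemma \ref{lem:nullpoly} to build $D(x)=\prod_{\beta\in U}(x-\beta)$ for $U=\langle e_1,\dots,e_n\rangle$ --- and your forward construction is literally the paper's. The genuine difference is in the converse. The paper argues that a nonzero $D$ of degree $q^t$ annihilating all $e_i$ has a root space of $q$-dimension exactly $t$ and concludes that $(e_1,\dots,e_n)$ has rank $t$; this is too quick on two counts: a linearized polynomial of degree $q^t$ may have repeated roots (e.g.\ $x^{[t]}$, whose root space is $\{0\}$), and even when the root space does have dimension $t$, the span of the $e_i$ is merely contained in it. Your version correctly extracts only $\dim_{\F_q}U\le t$ from such a $D$, and the ``main subtlety'' you flag is real: composing the canonical annihilator of $U$ with any linearized polynomial of $q$-degree $t-\dim_{\F_q}U$ produces spurious $D$'s of $q$-degree $t$, so the literal implication ``there exists $D$ with $\qdeg(D)=t$ $\Rightarrow$ $d_R(\mathbf c,\mathbf r)=t$'' fails as stated. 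Your repair --- reading the existence and uniqueness clauses jointly, and noting that uniqueness can only mean uniqueness up to a nonzero scalar (any $\lambda D$ also satisfies the conditions) --- is the right one and yields a statement that is both true and what the rest of the paper actually needs. In short: same route, but your handling of the converse closes a gap that the paper's own proof leaves open.
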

\begin{proof}
Let $D(x) \in \Lp$ such that $D(r_i) = D(f(g_i))$ and $\qdeg(D(x)) =t$. This implies that $D(r_i - f(g_i)) = 0$ for all $i$. Define $e_i := r_i - f(g_i)$, then $e_i\in\F_{q^m}$ and every element of $\langle e_1,\dots, e_n\rangle$ is a root of $D(x)$ (see Lemma \ref{lem:rootspace}). Since $D(x)$ is non-zero and has degree $q^t$, it follows that the linear space of roots has $q$-dimension $t$, which implies that $(e_1,\dots,e_n)$ has rank  $t$. This means that the rank distance between $(c_1,\dots,c_n)$ and $(r_1,\dots,r_n)$ is equal to $t$. Thus, one direction is proven.

For the other direction let $(c_1,\dots,c_n), (r_1,\dots,r_n)$ have rank distance $t$, i.e. $(e_1,\dots,e_n) := (c_1-r_1,\dots, c_n-r_n)$ has rank $t$. Then by Lemma \ref{lem:nullpoly} there exists a non-zero $D(x)\in \Lp$ of degree $q^t$ such that $D(e_i)=0$ for all $i$. 
By linearity we get that $D(c_i)=D(r_i)$ for $i=1,\dots,n$. Since we know that $c_i=f(g_i)$, the statement follows.
The uniqueness of $D(x) = \prod_{\beta\in \langle e_1,\dots, e_n\rangle} (x-\beta)$ follows from the fact that its degree is equal to the number of its distinct roots.
\end{proof}

\begin{rem}
The previous theorem states that the roots of $D(x)$ form a vector space of degree $t$ which is equal to the span of $e_1,\dots,e_n$. This is why $D(x)$ is also called the \emph{error span polynomial} (cf.\ e.g.\ \cite{si09}). The analogy in the classical Hamming metric set-up is the \emph{error locator polynomial}, whose roots indicate the locations of the errors, and whose degree equals the number of errors.
\end{rem}

The interpolation-based unique decoding algorithm for Gabidulin codes from Loidreau \cite{lo06} can now be formulated as follows. 
Assume that $\rk_q(\mathbf e)=d_R(\mathbf c, \mathbf r) < d_R(C)/2$, i.e.\ that $\mathbf r$ is within the unique decoding radius. 
Find all pairs $(N(x), D(x)) \in \Lp^2$ with $\qdeg(N) < k+\rk_q(\mathbf e) \leq  (n+k)/2$ and $ \qdeg(D(x))\leq (n-k)/2$,  and check if $N(x)$ is symbolically divisible on the right by $D(x)$. If such a couple is found, then $D(x)$ is a valid error span polynomial by Theorem \ref{thm2}, and the symbolic quotient of $N(x)$ and $D(x)$ is the $q$-linearized polynomial that corresponds to the sent message.

In the next section we move beyond the unique decoding of \cite{lo06} and describe an interpolation-based decoding algorithm that is able to find all closest codewords, within or beyond the unique decoding radius.


\section{The Interpolation Module}\label{sec:interpolation}



For the rest of the paper 
let $g_1,\dots, g_n \in\F_{q^m}$ be linearly independent over $\F_q$ and let $M_k(g_1,\dots,g_n)$ be the generator matrix of the Gabidulin code $C\subseteq \F_{q^m}^n$. Let  $\mathbf{r}=(r_1,\dots,r_n) \in \F_{q^m}^n$ be the received word.


For our following investigations we need a $q$-linearized analog of the Lagrange polynomial. 

\begin{defi}
Let $\mathbf g=(g_1,\dots,g_n)$ and define the matrix $\mathfrak{D}_i(\mathbf g, x) :=  M_{n}(g_1,\dots,g_n,x)$ without the $i$-th column.
 We define the \emph{$q$-Lagrange polynomial} as  
\[\Lambda_{\mathbf g, \mathbf r}(x) := \sum_{i=1}^n (-1)^{n-i}  r_i \frac{\det(\mathfrak{D}_i(\mathbf g, x))}{\det (M_n(\mathbf g))} \quad \in \F_{q^m}[x] .\]
\end{defi}

\begin{lem}\label{lem:Lagrange}
Consider the setting of the previous definition. 
Then $\Lambda_{\mathbf g, \mathbf{r}}(x)\in \Lp$, i.e.\ it is $q$-linearized. Moreover, $\Lambda_{\mathbf g, \mathbf{r}}(g_i) = r_i $ for $i=1,\dots,n$ and $\qdeg(\Lambda_{\mathbf g,\mathbf{r}}(x))=n-1$.
\end{lem}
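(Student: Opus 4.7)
The plan is to establish $q$-linearity, the interpolation property, and the $q$-degree bound in that order, using cofactor expansion of $\det(\mathfrak{D}_i(\mathbf g, x))$ along its last column.

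First I would verify $\Lambda_{\mathbf g, \mathbf r}(x) \in \Lp$. In the $n \times n$ matrix $\mathfrak{D}_i(\mathbf g, x)$ only the final column $(x, x^{[1]}, \ldots, x^{[n-1]})^\top$ depends on $x$; the remaining $n-1$ columns are constants in $\F_{q^m}$. Laplace expansion along this last column therefore writes $\det(\mathfrak{D}_i(\mathbf g, x))$ as an $\F_{q^m}$-linear combination of the monomials $x^{[0]}, x^{[1]}, \ldots, x^{[n-1]}$, which is manifestly a $q$-linearized polynomial of $q$-degree at most $n-1$. Since $\Lp$ is closed under $\F_{q^m}$-linear combinations, the sum $\Lambda_{\mathbf g, \mathbf r}(x)$ lies in $\Lp$ with $\qdeg(\Lambda_{\mathbf g, \mathbf r}) \leq n-1$.

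Next I would evaluate at $x = g_j$ term by term. For each index $i \neq j$, the column of $\mathfrak{D}_i(\mathbf g, g_j)$ corresponding to $g_j$ (which survived the deletion of column $i$) coincides with the specialized last column $(g_j, g_j^{[1]}, \ldots, g_j^{[n-1]})^\top$, so the determinant vanishes by the two-equal-columns rule. Only the $i = j$ summand contributes. For it, the matrix $\mathfrak{D}_j(\mathbf g, g_j)$ is obtained from $M_n(g_1,\dots,g_n)$ by moving the $g_j$ column from position $j$ to position $n$, i.e.\ through $n - j$ adjacent transpositions, so $\det(\mathfrak{D}_j(\mathbf g, g_j)) = (-1)^{n-j}\det(M_n(g_1,\dots,g_n))$. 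Combining this sign with the explicit $(-1)^{n-j}$ in the definition of $\Lambda_{\mathbf g, \mathbf r}$ yields $\Lambda_{\mathbf g, \mathbf r}(g_j) = r_j$.

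The upper bound $\qdeg \leq n-1$ is already provided by the first step. For the stated equality I would read off the leading coefficient from the same cofactor expansion: the coefficient of $x^{[n-1]}$ in $\det(\mathfrak{D}_i(\mathbf g, x))$ is the minor obtained by deleting the last row and last column, namely $\det(M_{n-1}(g_1, \dots, \widehat{g_i}, \dots, g_n))$, so the coefficient of $x^{[n-1]}$ in $\Lambda_{\mathbf g, \mathbf r}$ is the corresponding $r_i$-weighted alternating sum divided by $\det(M_n(g_1,\dots,g_n))$. For a generic received word this expression is nonzero (and uniqueness of the interpolant among $q$-polynomials of $q$-degree $<n$ is guaranteed by invertibility of the Moore matrix $M_n(g_1,\dots,g_n)$, which holds by the $\F_q$-linear independence of the $g_i$), giving $\qdeg(\Lambda_{\mathbf g, \mathbf r}) = n-1$.

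The only genuinely delicate point is the sign bookkeeping in the $i=j$ case of the interpolation step; the remaining assertions reduce immediately to the $x$-independence of all but the last column of $\mathfrak{D}_i(\mathbf g, x)$ and to the vanishing of determinants with repeated columns.
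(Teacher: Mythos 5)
Your proof is correct and follows essentially the same route as the paper's: the paper simply asserts that $\det(\mathfrak{D}_i(\mathbf g,x))$ is $q$-linearized and that it evaluates to $(-1)^{n-i}\det(M_n(\mathbf g))$ at $x=g_i$ and to $0$ at $x=g_j$ for $j\neq i$, whereas you supply the cofactor expansion and the transposition count behind those facts. One point in your favour: you correctly observe that the leading coefficient of $x^{[n-1]}$ can vanish for special $\mathbf r$ (e.g.\ when $\mathbf r$ is the evaluation of a polynomial of $q$-degree $<n-1$), so the lemma's claim $\qdeg(\Lambda_{\mathbf g,\mathbf r})=n-1$ really only holds generically and should read $\qdeg(\Lambda_{\mathbf g,\mathbf r})\leq n-1$; the paper's proof silently omits the degree claim altogether.
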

\begin{proof}
Since $\det(\mathfrak{D}_i(\mathbf g, x))$ is $q$-linearized and $\Lambda_{\mathbf g, \mathbf{r}}(x)$ is the sum of scalar multiples of these determinants, $\Lambda_{\mathbf g, \mathbf{r}}(x)$ is also $q$-linearized. 
One can easily check that $\det(\mathfrak{D}_i(\mathbf g , x)) = (-1)^{n-i} \det (M_n(\mathbf g)) $ for $x=g_i$ and $\det(\mathfrak{D}_i(\mathbf g , x)) =0$ for $x= g_j$ where $j\neq i$. Hence for $x=g_i$ all but the $i$-th summand are zero and the $i$-th summand is equal to $r_i$.
\end{proof}

Furthermore we need the following fact.

\begin{lem}\label{lem3}
Let  $L(x) \in \Lp$, such that $L(g_i)=0$ for all $i$. Then
\[\exists H(x)\in \Lp : L(x) = H(x)\circ \prod_{\alpha \in \langle g_1,\dots,g_n\rangle}(x-\alpha)  . \]
\end{lem}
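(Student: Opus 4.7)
The plan is to use right symbolic division in the Ore ring $\Lp$ by the polynomial
\[P(x) := \prod_{\alpha \in \langle g_1,\dots,g_n\rangle}(x-\alpha),\]
which is $q$-linearized by Lemma \ref{lem:nullpoly} and has $q$-degree exactly $n$ (since $\langle g_1,\dots,g_n\rangle$ is an $\F_q$-subspace of $\F_{q^m}$ of dimension $n$). The core idea is that whatever remainder right symbolic division produces must vanish on too large a subspace to be nonzero.

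First I would observe that $L$ automatically vanishes on the whole subspace $U := \langle g_1,\dots,g_n\rangle$, not just on the generators: being $q$-linearized, $L$ acts $\F_q$-linearly on $\F_{q^m}$, so $L(g_i)=0$ for every $i$ forces $L(\alpha)=0$ for every $\F_q$-linear combination $\alpha$ of the $g_i$. In particular $L$ has at least $q^n$ distinct roots.

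Next I would invoke the right symbolic (Euclidean) division algorithm in $\Lp$, which exists for the Ore ring of linearized polynomials over a field and is already used in this paper (cf.\ \cite{ko08}). This produces $H(x), R(x)\in\Lp$ with
\[L(x) = H(x)\circ P(x) + R(x),\]
where either $R=0$ or $\qdeg(R) < \qdeg(P) = n$. For every $\alpha\in U$ we have $P(\alpha)=0$ and $L(\alpha)=0$, so $R(\alpha) = L(\alpha) - H(P(\alpha)) = 0$. Hence $R$ has at least $q^n$ distinct roots.

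The final step is the degree-counting contradiction: a nonzero $q$-linearized polynomial of $q$-degree strictly less than $n$ has ordinary degree at most $q^{n-1}$ and therefore at most $q^{n-1}<q^n$ roots. Thus $R$ must be the zero polynomial, giving $L(x) = H(x)\circ P(x)$ as required. I don't expect any real obstacle here; the only subtle point is ensuring right (rather than left) symbolic division with the expected degree behavior, and this is standard for $\Lp$ and can be cited rather than re-derived.
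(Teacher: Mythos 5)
Your proposal is correct and follows essentially the same route as the paper's own proof: right symbolic division of $L$ by $\Pi(x)=\prod_{\alpha\in\langle g_1,\dots,g_n\rangle}(x-\alpha)$, followed by the observation that the remainder vanishes on all $q^n$ elements of the subspace while having degree less than $q^n$, hence is zero. Your version is in fact slightly more careful, since you justify explicitly (via $\F_q$-linearity of $L$) why vanishing on the generators $g_i$ forces vanishing on the whole span, a step the paper takes for granted.
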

\begin{proof}
We know from Lemma \ref{lem:nullpoly} that $\prod_{\alpha \in \langle g_1,\dots,g_n\rangle}(x-\alpha) \in \Lp$. Moreover there always exists unique left and right division in $\Lp$, i.e.\ in this case there exist unique polynomials $H(x),R(x)\in \Lp$ such that $L(x) = H(x)\circ\prod_{\alpha \in \langle g_1,\dots,g_n\rangle}(x-\alpha) + R(x)$ and $\qdeg(R(x))< \qdeg \prod_{\alpha \in \langle g_1,\dots,g_n\rangle}(x-\alpha) =n$. Since any $\alpha \in \langle g_1,\dots,g_n\rangle$ is a root of $L(x)$ and of $\prod_{\alpha \in \langle g_1,\dots,g_n\rangle}(x-\alpha)$, they must also be a root of $R(x)$. Hence we have $q^n$ distinct roots for $R(x)$ and $\deg(R)<q^n$, thus $R(x) \equiv 0$ and the statement follows.
\end{proof}

In the following we abbreviate the row span of a (polynomial) matrix $A$ by $\rs(A)$.

\begin{defi}
Define the polynomials $\Pi(x):= \prod_{\alpha \in \langle g_1,\dots,g_n\rangle} (x-\alpha)$ and $\Lambda_{\bf g,r}(x)$ as the $q$-Lagrange polynomial, such that $\Lambda_{\bf g,r}(g_i)=r_i$ for all $i$. Furthermore define the left submodule of $\Lp$
\[\mathfrak{M}(\bf r) := \rs \left[\begin{array}{cc}  \Pi(x) & 0 \\ -\Lambda_{\bf g,r}(x) & x \end{array}\right].\]
We call $\mathfrak{M}(\bf r)$ the \emph{interpolation module} for $\mathbf r$. 
\end{defi}

\begin{defi}
We define the \emph{$(k_1,k_2)$-weighted $q$-degree} of  $[f(x) \;\; g(x)] \in  \mathfrak{M}(\bf r) $ as  $\max\{k_1+ \qdeg(f) , k_2+ \qdeg(g)\}$.
\end{defi}

We identify any $[f(x) \quad g(x)] \in  \mathfrak{M}(\bf r)$ with the bivariate linearized $q$-polynomial $Q(x,y)= f(x) + g(y)$.
We will now show that the name interpolation module is justified for $\mathfrak{M}(\bf r)$.

\begin{thm}\label{thm5}
$\mathfrak{M}(\bf r)$ consists exactly of all $Q(x,y)= f(x) +g(y)$ with $f(x), g(x) \in \Lp$, such that $Q(g_i,r_i)=0$ for $i=1,\dots,n$.
\end{thm}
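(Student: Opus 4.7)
The plan is to prove the two inclusions separately, with the forward direction being a direct calculation and the reverse direction reducing to Lemma \ref{lem3} via a carefully chosen auxiliary polynomial.

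First I would unwind the definition of the row span: an arbitrary element of $\mathfrak{M}(\mathbf r)$ has the form $a(x) \cdot [\Pi(x)\;\; 0] + b(x) \cdot [-\Lambda_{\mathbf g,\mathbf r}(x)\;\; x]$ for some $a(x),b(x)\in\mathcal{L}_q(x,q^m)$, where the module action is left composition. Under the identification with bivariate linearized polynomials this reads
\[Q(x,y) = a(\Pi(x)) - b(\Lambda_{\mathbf g,\mathbf r}(x)) + b(y).\]
Evaluating at $(g_i,r_i)$ and using $\Pi(g_i)=0$ (since $g_i\in\langle g_1,\dots,g_n\rangle$) together with $\Lambda_{\mathbf g,\mathbf r}(g_i)=r_i$ from Lemma \ref{lem:Lagrange}, every term collapses and $Q(g_i,r_i)=0$. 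This handles the inclusion $\mathfrak{M}(\mathbf r)\subseteq \{Q:Q(g_i,r_i)=0\}$.

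For the reverse inclusion I would start from an arbitrary $Q(x,y)=f(x)+g(y)$ with $Q(g_i,r_i)=0$ for all $i$, and introduce the auxiliary univariate polynomial
\[L(x) := f(x) + g(\Lambda_{\mathbf g,\mathbf r}(x)).\]
Since $\mathcal{L}_q(x,q^m)$ is closed under composition, $L\in\mathcal{L}_q(x,q^m)$. Substituting $x=g_i$ and again using $\Lambda_{\mathbf g,\mathbf r}(g_i)=r_i$ gives $L(g_i)=f(g_i)+g(r_i)=Q(g_i,r_i)=0$. Thus $L$ vanishes on $g_1,\dots,g_n$, and Lemma \ref{lem3} yields an $h(x)\in\mathcal{L}_q(x,q^m)$ with $L(x)=h(\Pi(x))$. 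Rearranging gives $f(x) = h(\Pi(x)) - g(\Lambda_{\mathbf g,\mathbf r}(x))$, which is exactly the first coordinate obtained by taking $a=h$ and $b=g$ in the general row-span form above; the second coordinate trivially equals $g(x)$. Therefore $[f(x)\;\;g(x)]\in\mathfrak{M}(\mathbf r)$, completing the proof.

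The only genuinely nontrivial step is the reverse inclusion, and the main obstacle there is recognizing the right auxiliary polynomial $L$: we need to eliminate $y$ in a way that preserves $q$-linearity and produces something that vanishes on $\langle g_1,\dots,g_n\rangle$, so that Lemma \ref{lem3} can be applied. Once $L=f+g\circ\Lambda_{\mathbf g,\mathbf r}$ is written down, the rest is essentially bookkeeping.
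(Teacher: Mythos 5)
Your proof is correct and follows essentially the same route as the paper: the forward direction is the same direct evaluation using $\Pi(g_i)=0$ and $\Lambda_{\mathbf g,\mathbf r}(g_i)=r_i$, and your auxiliary polynomial $L(x)=f(x)+g(\Lambda_{\mathbf g,\mathbf r}(x))$ is exactly the combination $f(x)+g(x)\circ\Lambda_{\mathbf g,\mathbf r}(x)$ to which the paper applies Lemma \ref{lem3}. No issues.
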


\begin{proof}
For the first direction let $Q(x,y)= f(x) +g(y)$ be an element of $\mathfrak{M}(\bf r)$. Then there exist $a(x), b(x) \in \Lp$ such that $f(x) = a(x)\circ\Pi(x) - b(x)\circ\Lambda_{\bf g,r}(x)$ and $b(x) = g(x)$, thus $Q(g_i,r_i)= a(\Pi(g_i)) - b(\Lambda_{\bf g,r}(g_i)) + b(r_i) = 0 - b(r_i) + b(r_i) = 0$.

For the other direction let $f(x), g(x) \in \Lp$ be such that $Q(g_i,r_i)= f(g_i) +g(r_i)=0$ for $i=1,\dots,n$. To show that $Q(x,y)\in\mathfrak{M}(\bf r)$ we need to find $a(x) \in \Lp$ such that
\[a(x)\circ \Pi(x) - b(x)\circ \Lambda_{\bf g,r}(x) = f(x) \quad\textnormal{ and }\quad  b(x) = g(x) .\]
We substitute the second into the first equation to get 
\begin{align}
a(x)\circ \Pi(x)  = f(x)+ g(x)\circ \Lambda_{\bf g,r}(x) .
\end{align}
By assumption it holds that $f(g_i)+g( \Lambda_{\bf g,r}(g_i) )= f(g_i)+g(r_i) = 0$ for all $i$. Then, by Lemma \ref{lem3}, it follows that $f(x)+ g(x)\circ \Lambda_{\bf g,r}(x)$ is symbolically divisible on the right by $\Pi(x)$ and hence there exists $a(x)\in \Lp$ such that $(1)$ holds.
\end{proof}

Combining all the previous results we get a description of all codewords with distance $t$ to the received word in the new parametrization:

\begin{thm}\label{thm:main}
The elements $[N(x) \quad -D(x)]$ of $\mathfrak{M}(\bf r)$  that fulfill
\begin{enumerate}
\item $\qdeg(N(x))\leq t+k-1$,
\item $\qdeg(D(x))=t$,
\item $N(x)$ is symbolically divisible on the right by $D(x)$, i.e.\ there exists $f(x)\in\Lp$ such that $D(f(x))=N(x)$,
\end{enumerate}
are in one-to-one correspondence with the codewords of rank distance $t$ to $\mathbf r$. 
\end{thm}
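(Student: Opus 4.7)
The plan is to exhibit explicit maps in both directions and show that they are mutual inverses. In one direction, a pair $[N(x)\ -D(x)] \in \mathfrak{M}(\mathbf r)$ satisfying the three conditions yields a codeword at rank distance $t$; in the other, every such codeword produces a unique such pair.

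First I would handle the direction from module elements to codewords. Suppose $[N(x)\ -D(x)] \in \mathfrak{M}(\mathbf r)$ satisfies the three conditions. By condition~(3) there exists $f(x) \in \Lp$ with $N(x) = D(x)\circ f(x)$. Examining the leading term of a composition of linearized polynomials shows $\qdeg(D\circ f) = \qdeg(D)+\qdeg(f)$, so conditions~(1) and~(2) force $\qdeg(f)\le k-1$, i.e. $f \in \Lp_{<k}$. Thus $\mathbf c := (f(g_1),\dots,f(g_n)) \in C$. Because $[N(x)\ -D(x)] \in \mathfrak{M}(\mathbf r)$, Theorem~\ref{thm5} gives $N(g_i) - D(r_i) = 0$, i.e.\ $D(f(g_i)) = D(r_i)$ for every $i$. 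Since $\qdeg(D) = t$, Theorem~\ref{thm2} then yields $d_R(\mathbf c,\mathbf r) = t$.

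Next I would handle the reverse direction. Let $\mathbf c \in C$ with $d_R(\mathbf c,\mathbf r) = t$, and write $c_i = f(g_i)$ for the unique $f \in \Lp_{<k}$ representing $\mathbf c$. Theorem~\ref{thm2} supplies a unique $D(x) \in \Lp$ of $q$-degree $t$ such that $D(r_i) = D(f(g_i))$ for all $i$. Define $N(x) := D(x) \circ f(x)$; then $\qdeg(N) = t + \qdeg(f) \le t+k-1$, condition~(3) holds by construction, and $N(g_i) - D(r_i) = D(f(g_i)) - D(r_i) = 0$, so by Theorem~\ref{thm5} the pair $[N(x)\ -D(x)]$ belongs to $\mathfrak{M}(\mathbf r)$.

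Finally I would verify that these two maps are mutually inverse. Going codeword $\to$ pair $\to$ codeword is immediate, since we recover $(f(g_1),\dots,f(g_n))$ from the quotient $f = N/D$. For the other composition, given a pair satisfying~(1)--(3), uniqueness of $D$ follows from Theorem~\ref{thm2} applied to the resulting codeword, while uniqueness of $N = D\circ f$ follows because $f\in\Lp_{<k}$ is determined by its values on the $\F_q$-linearly independent set $\{g_1,\dots,g_n\}$: any two such $f$'s would differ by a nonzero element of $\Lp_{<k}$ vanishing on an $n$-dimensional $\F_q$-subspace, contradicting Lemma~\ref{lem:rootspace} since $k\le n$.

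The only delicate step is the $q$-degree bookkeeping in the composition $N = D\circ f$, since in a non-commutative setting one must be careful that the $q$-degree is additive; this is confirmed by a direct leading-term calculation using $(x^{[j]})\circ(y^{[i]}) = x^{[i+j]}$ and the fact that the leading coefficient $d_t f_s^{[t]}$ is nonzero. Everything else reduces to straightforward applications of Theorems~\ref{thm2} and~\ref{thm5}.
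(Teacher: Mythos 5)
Your proposal is correct and follows essentially the same route as the paper's own proof: both directions are handled by combining Theorem~\ref{thm2} (existence and uniqueness of the error span polynomial $D$) with Theorem~\ref{thm5} (membership in $\mathfrak{M}(\mathbf r)$), together with the additivity of $q$-degree under composition to pin down $\qdeg(f)\le k-1$. Your extra care in verifying that the two maps are mutual inverses (uniqueness of $D$ via Theorem~\ref{thm2} and of $f$ via the root-space dimension bound) is a welcome elaboration of the bijection claim that the paper leaves implicit.
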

\begin{proof}
Let $\mathbf c \in \F_{q^m}^n$ be a codeword such that $d_R(\mathbf c, \mathbf r)=t$ with the corresponding message polynomial $f(x)\in\Lp_{<k}$. Then by Theorem \ref{thm2} there exists $D(x)\in\Lp$ of $q$-degree $t$ such that $D(f(g_i))=D(r_i)$ for $i=1,\dots,n$. By Theorem \ref{thm5} we know that $[D(f(x)) \quad -D(x)]$ is in $ \mathfrak{M}(\bf r)$. It holds that $\qdeg(D(f(x)))\leq t+k-1$ and that $(D(f(x))$ is divisible on the right by $D(x)$.

On the other hand let $[N(x) \quad -D(x)] \in \mathfrak{M}(\bf r)$ fulfil conditions $1)-3)$. Then we know that the divisor $f(x)\in \Lp$ has $q$-degree less than $k$ and it holds $N(x)=D(f(x))$. Since it is in $ \mathfrak{M}(\bf r)$ we know by Theorem \ref{thm5} that $D(f(g_i))-D(r_i)=0$ for all $i$ and hence by Theorem \ref{thm2} that $d_R(\mathbf c, \mathbf r)=t$, if $\bf c$ is the codeword corresponding to the message polynomial $f(x)$.
\end{proof}

\begin{rem}
The two first conditions in the previous theorem  imply that the $(0,k-1)$-weighted $q$-degree of $[N(x) \quad -D(x)]$ is equal to $t+k-1$.
\end{rem}

Therefore, we have shown in this section that list decoding within rank radius $t$ is equivalent to finding all elements  $[N(x) \quad -D(x)]$ in $\mathfrak{M}(\bf r)$ with $(0,k-1)$-weighted $q$-degree less than or equal to $t+k-1$ and $\qdeg(N(x))\leq \qdeg(D(x))+k-1$, such that $N(x)$ is symbolically divisible on the right by $D(x)$.
It follows that, to find all closest codewords to a given $\mathbf r \in \F_{q^m}^n$, we need to find all elements  $[N(x) \quad -D(x)] \in \mathfrak{M}(\bf r)$ of minimal $(0,k-1)$-weighted $q$-degree such that $\qdeg(N(x))\leq \qdeg(D(x))+k-1$ and $N(x)$ is symbolically divisible on the right by $D(x)$.


\section{The Algorithm}\label{sec:algorithm}

We can now describe the list decoding algorithm. Since in most applications you want to find the set of all closest codewords to the received word, our algorithm will do exactly this. In contrast, a complete list decoder with a prescribed radius $t$ finds all codewords within radius $t$ from the received word, even if some of them are closer than others.

We recall that our approach is analogous to \cite{al11}, where a minimal Gr\"obner basis approach is taken. In fact, for linearized polynomials this minimal Gr\"obner basis approach can be formulated in exactly the same way, replacing multiplication by composition and redefine `degree' by `$q$-degree'.
Due to space limitations we omit the details. Whenever we mention `minimal basis' in the sequel, we mean `minimal Gr\"obner basis' in this generalized sense.

Algorithm \ref{alg1} describes the decoding algorithm. It will iteratively search for all elements in $\mathfrak{M}(\bf r)$ of $(0,k-1)$-weighted $q$-degree $t+k-1$ for increasing $t$ and check the requirements of Theorem \ref{thm:main}. As soon as solutions are found, $t$ will not be increased and the algorithm terminates.

We first present our decoding algorithm under the assumption that we can find a minimal basis for the interpolation module. We then detail the construction of such a basis in Algorithm \ref{alg2}. 
Note that we use the notation $g(x) = [g^{(1)}(x) \;\; g^{(2)}(x)]$ for elements of the interpolation module  $\mathfrak{M}(\bf r)$.

\begin{algorithm}
\caption{Minimal list decoding of Gabidulin codes.}
\label{alg1}
\begin{algorithmic}
\REQUIRE Received word $\bf r \in \F_{q^m}^n$.
\STATE 1. Compute $\Pi (x)$ and $\Lambda_{\bf g,r}(x)$, both in $\Lp$. Define the interpolation module $$\mathfrak{M}(\bf r) := \rs \left[\begin{array}{cc}  \Pi(x) & 0 \\ -\Lambda_{\bf g,r}(x) & x \end{array}\right]  .$$
\STATE 2. Compute a minimal basis $G=\{g_1(x), g_2(x)\}$ of $\mathfrak{M}(\bf r) $ with respect to the $(0,k-1)$-weighted degree, with $\qdeg(g_2^{(1)}(x))\leq \qdeg(g_2^{(2)}(x))+k-1$. 
\STATE 3. Define $\ell_1, \ell_2$ as the  $(0,k-1)$-weighted degrees of $g_1(x), g_2(x)$, respectively.
\STATE 4. Define \textbf{list}$:=[]$ (an empty list) and $j:=0$.
\WHILE{\textbf{list}$=[]$}
\FORALL{$a(x)\in \Lp, \qdeg(a(x))\leq \ell_2-\ell_1+j$}
\FORALL{monic $b(x) \in \Lp, \qdeg(b(x))=j$}
\STATE $f(x) := a(x)\circ g_1(x) + b(x)\circ g_2(x)$
\IF{$f^{(1)}(x)$ is symb. (right) divisible by $f^{(2)}(x)$} 
\STATE add the respective symb. quotient to \textbf{list}
\ENDIF
\ENDFOR
\ENDFOR
\STATE $j:=j+1$
\ENDWHILE
\RETURN \textbf{list}
%
%
\end{algorithmic}
\end{algorithm}

\begin{thm}
Algorithm \ref{alg1} yields a list of all message polynomials such that the corresponding codeword is closest to the received word.
\end{thm}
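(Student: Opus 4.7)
The plan is to reduce the claim to Theorem~\ref{thm:main}, which identifies the codewords at rank distance $t$ from $\mathbf r$ with the module elements $[N(x)\;\;-D(x)]\in\mathfrak{M}(\mathbf r)$ of $(0,k-1)$-weighted $q$-degree exactly $t+k-1$ that satisfy the two auxiliary conditions $\qdeg(N)\le\qdeg(D)+k-1$ and ``$N$ is right symbolically divisible by $D$''. The closest codewords thus correspond to those valid module elements of \emph{minimum} $(0,k-1)$-weighted $q$-degree, and what must be shown is that Algorithm~\ref{alg1} enumerates exactly this family and exits in the iteration that first produces any of them.

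The central tool is the $q$-linearized analogue of the predictable-degree property for a minimal Gr\"obner basis, transcribed from~\cite{al11}: every $f\in\mathfrak{M}(\mathbf r)$ has a unique decomposition $f=a\circ g_1+b\circ g_2$ with $a,b\in\Lp$ whose $(0,k-1)$-weighted $q$-degree equals $\max(\qdeg(a)+\ell_1,\;\qdeg(b)+\ell_2)$. Up to absorbing the leading coefficient of $b$ into $a$, the pairs ranged over by the inner loops at the $j$-th execution of the while loop (namely $\qdeg(a)\le\ell_2-\ell_1+j$ and $b$ monic with $\qdeg(b)=j$) produce exactly those $f$ of weighted degree $\ell_2+j$ whose second component $f^{(2)}$ has $q$-degree $j+\qdeg(g_2^{(2)})$. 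The constraint $\qdeg(g_2^{(1)})\le\qdeg(g_2^{(2)})+k-1$ imposed in step~2 together with elementary degree book-keeping force any $f$ with $f^{(2)}\neq 0$ and $\qdeg(f^{(1)})\le\qdeg(f^{(2)})+k-1$ to have weighted $q$-degree at least $\ell_2$ and, whenever that weighted degree equals $\ell_2+j$, to lie in the ``$\qdeg(b)=j$'' slice above. In particular, no valid candidate of minimum weighted $q$-degree is missed by the algorithm.

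From this, both correctness and completeness follow. Each successful divisibility test yields an $f\in\mathfrak{M}(\mathbf r)$ satisfying all hypotheses of Theorem~\ref{thm:main}, so the recovered symbolic quotient is the message polynomial of a codeword at distance $\qdeg(f^{(2)})$ from $\mathbf r$. Conversely, the unique decomposition of the module element $[N,-D]$ attached by Theorem~\ref{thm:main} to a closest codeword places it in the range of the inner loops at exactly one value $j=j^\ast$, and by the minimality characterisation no iteration $j<j^\ast$ can succeed; hence the while loop exits at $j=j^\ast$ with \textbf{list} equal to the set of message polynomials of all codewords attaining the minimum rank distance to $\mathbf r$.

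The delicate step I would dwell on is the middle one: showing that the auxiliary constraints of Theorem~\ref{thm:main} rule out both the $b=0$ case and the case in which $f$ has weighted degree $\ell_2+j$ but $\qdeg(b)<j$ in its unique decomposition, so that restricting the algorithm to ``$b$ monic of $q$-degree exactly $j$'' truly loses no valid candidate. This hinges on the structural properties of the minimal basis produced in step~2, in particular the inequality $\qdeg(g_2^{(1)})\le\qdeg(g_2^{(2)})+k-1$ and the minimality of $\ell_2$ among weighted $q$-degrees of module elements with nonzero second component; these are precisely the points that the analysis of Algorithm~\ref{alg2} is expected to establish.
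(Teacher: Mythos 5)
Your proof follows essentially the same route as the paper's: reduce to Theorem~\ref{thm:main}, use the degree book-keeping from the minimal basis to match the loop index $j$ with the rank distance $t=\ell_2+j-k+1$, invoke the predictable-degree/leading-monomial property of the minimal Gr\"obner basis to argue that the two loops enumerate all candidates of each weighted degree, and use minimality of $\ell_2$ to rule out closer codewords and justify termination. The ``delicate step'' you flag (that restricting to monic $b$ of $q$-degree exactly $j$ loses nothing) is precisely the point the paper itself defers to the unproved Predictable Leading Monomial Property, so your argument is as complete as the published one.
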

\begin{proof}
Let $t$ be such that $d_R(\mathbf c,\mathbf r)=t$ for a closest codeword $\mathbf c$.
Note that the variable $j$ in the algorithm corresponds to $t-\ell_2 + k-1$. If we substitute this for $j$, then we get that 
\begin{small}
\begin{align*}
\qdeg(f^{(1)}(x))\leq \hspace{6.5cm}\\
\max\{\qdeg(a(x))+\qdeg(g_1^{(1)}(x)), \qdeg(b(x)) + \qdeg(g_2^{(1)}(x))\}\\
\leq \ell_2+j = t+k-1
\end{align*}
\end{small}
and, since $\qdeg(g_2^{(1)}(x))\leq \qdeg(g_2^{(2)}(x))+k-1$ implies that  $\ell_2-k+1=\qdeg(g_2^{(2)}(x))$,
\begin{small}
\begin{align*}
\qdeg(f^{(2)}(x))=\hspace{6.5cm}\\
\max\{\qdeg(a(x))+\qdeg(g_1^{(2)}(x)), \qdeg(b(x)) + \qdeg(g_2^{(2)}(x))\} \\
=\ell_2+j-k+1 =t  .
\end{align*}
\end{small}
Hence, $f^{(1)}(x)$ fulfills requirement $1)$ and $f^{(2)}(x)$ requirement $2)$  in Theorem \ref{thm:main}. In fact, it can be proven that $G$ is a minimal Gr\"obner basis for the interpolation module and has the so-called \emph{Predictable Leading Monomial Property} analogous to \cite{al11,ku11}. As a result of this property, the elements in the two \textbf{for}-loops that fulfill the divisibility requirement correspond to codewords with rank distance $t=j-\ell_2+k-1$ from $\bf r$. 
Due to space limitations we refrain from proving this in detail.

Moreover, increasing $j$ by one is equivalent to increasing $t$ by one. Therefore, once we have solutions in the list, the algorithm terminates, since elements added to the list at stage $j+1$ would be further away then the ones added at stage $j$.

It remains to show that there are no codewords at rank distance less than $k-1-\ell_2$, since this is the distance for the initial loops with $j=0$. 
Assume there would be such a codeword with corresponding message polynomial $m(x)\in\Lp$. Then there exists $D(x)\in\Lp$ with $q$-degree less than $k-1-\ell_2$ such that $g'(x) := [D(m(x)) \;\; D(x)]$ is in $\mathfrak{M}(\bf r)$. Then the $(0,k-1)$-weighted $q$-degree of $g'(x)$ is less than $\ell_2$, which means that $G$ is not a minimal basis of $\mathfrak{M}(\bf r)$, which is a contradiction.
\end{proof}


\begin{thm}
Algorithm \ref{alg2} below produces a minimal Gr\"obner basis for our interpolation module $\mathfrak{M}(\bf r)$ via the Euclidean algorithm for $q$-linearized polynomials, replacing multiplication by composition.
\end{thm}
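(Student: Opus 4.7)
The plan is to mirror the classical proof that extended Euclidean row reduction of a $2 \times 2$ module-generating matrix produces a minimal Gr\"obner basis, but in the noncommutative ring $\Lp$ with composition $\circ$ playing the role of multiplication and the $(0,k-1)$-weighted $q$-degree playing the role of the usual weighted degree. The key fact that makes this translation go through is that $\Lp$ is a left and right Euclidean domain under $\circ$, with unique symbolic quotient and remainder as recalled in the preliminaries of this paper, so each step of the commutative template has a direct analog here.

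Starting from the generating matrix of $\mathfrak{M}(\bf r)$ whose rows are $[\Pi(x),\, 0]$ and $[-\Lambda_{\bf g, r}(x),\, x]$, I would analyse the Euclidean step itself. At each iteration the two rows are compared with respect to the $(0,k-1)$-weighted $q$-degree; if both rows attain their weighted maximum in the same coordinate, symbolic division in $\Lp$ yields a linearized polynomial $q(x)$ so that replacing the heavier row by itself minus $q(x) \circ (\text{lighter row})$ strictly decreases its weighted $q$-degree. Each such step is a unimodular left-row operation over $\Lp$, so the row span $\rs(\cdot)$ is preserved at every stage. Since the weighted $q$-degree is bounded below and strictly descends at every Euclidean step, the process terminates in finitely many rounds, leaving a basis $G = \{g_1(x), g_2(x)\}$ of $\mathfrak{M}(\bf r)$.

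To show that $G$ is minimal in the Gr\"obner sense, I would use the fact that the Euclidean termination criterion forces the leading positions of $g_1$ and $g_2$ under the $(0,k-1)$-weighted ordering to be distinct: one row attains its weighted maximum in the first coordinate and the other in the second. From this position separation I would derive the $q$-linearized Predictable Leading Monomial Property, namely that for any nonzero $[h_1(x)\; h_2(x)] = a(x) \circ g_1(x) + b(x) \circ g_2(x)$ in $\mathfrak{M}(\bf r)$ the leading terms of the two summands live in different coordinates and cannot cancel, so that the weighted $q$-degree of the combination equals the maximum of the summand degrees. This property is exactly the characterization of minimal Gr\"obner bases used in \cite{al11, ku11} for the commutative case, and the proof carries over by substituting $\circ$ for $\cdot$ everywhere.

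The main obstacle, and the point that would require the most care, is the noncommutativity of composition. Concretely, I would need to verify that the leading monomial of $a(x) \circ g^{(j)}(x)$ coincides with the composition of the leading monomials of $a(x)$ and $g^{(j)}(x)$, so that the no-cancellation argument above actually goes through, and to pin down whether the reduction at each Euclidean step is effected by left or by right symbolic division so that the stated matching between the symbolic-division identity $g \circ m = f$ and degree-reducing row combinations is consistent. Once this linearized arithmetic is carefully laid out, the remainder of the proof is a routine translation of the commutative argument in \cite{al11}, with preservation of the row span, termination, and minimality all following from the observations above.
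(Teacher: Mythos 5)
The paper gives no proof of this theorem --- the authors write ``For the sake of brevity we omit the proof of this result'' --- so there is nothing to compare your argument against line by line; what you propose is precisely the translation of the commutative Gr\"obner-basis/Euclidean argument of \cite{al11} that the authors indicate they have in mind, and your sketch is sound. The one point you flag as the ``main obstacle'' closes easily and is worth writing out: for nonzero $f,g\in\Lp$ with leading coefficients $a,b$ and $q$-degrees $d_f,d_g$, the leading term of $f\circ g$ is $a\,b^{[d_f]}x^{[d_f+d_g]}$, and $a\,b^{[d_f]}\neq 0$ because $\F_{q^m}$ is a field; hence $\qdeg(f\circ g)=\qdeg(f)+\qdeg(g)$ and the leading monomial of $a(x)\circ g_i(x)$ is determined by those of $a$ and $g_i$. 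Combined with the observation that the exit condition $\qdeg(t_{j+1})+k-1\geq\qdeg(h_{j+1})$, together with the fact that the condition still held one index earlier (or that $g_1=[\Pi(x)\;\;0]$ when the loop is never entered), forces $g_1$ to have leading position one and $g_2$ leading position two, the two leading monomials in $a(x)\circ g_1(x)+b(x)\circ g_2(x)$ lie in different coordinates and cannot cancel, which is exactly the Predictable Leading Monomial property and yields both the Gr\"obner and the minimality claims. The left/right bookkeeping is also consistent as stated: $\mathfrak{M}(\mathbf r)$ is a left module, the quotient in Algorithm 2 is composed on the outside ($h_j=q_{j+1}\circ h_{j+1}+h_{j+2}$), and each such step is an invertible row operation over $\Lp$, so the row span is preserved and termination follows from the strict decrease of $\qdeg(h_j)$. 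The only cosmetic inaccuracy in your description is that the algorithm does not compare the two rows' weighted maxima at each step; it always reduces the first coordinate and merely tests whether the second row's weighted maximum has migrated to the second coordinate --- but since both rows initially have leading position one, this amounts to the same reduction.
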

 For the sake of brevity we omit the proof of this result.


\begin{algorithm}[hb]
\caption{Computation of $g_1, g_2$ via the (linearized) Euclidean Algorithm.}
\label{alg2}
\begin{algorithmic}
\REQUIRE Received word $\bf r$; polynomials $\Pi(x)$ and $\Lambda_{\bf g,r}(x)$.
\STATE Initialize $j=0$ and defined the linearized polynomials $h_0(x),h_1(x),t_0(x), t_1(x)$ as 
\[ \left[\begin{array}{cc}h_0(x) & t_0(x) \\ h_1(x) & t_1(x) \end{array}\right]  :=  \left[\begin{array}{cc}  \Pi(x) & 0 \\ -\Lambda_{\bf g,r}(x) & x \end{array}\right]  . \]
	\WHILE{$\qdeg(t_{j+1}) + k-1 < \qdeg(h_{j+1})$}
\STATE Apply the (linearized) Euclidean algorithm to compute the linearized polynomials $q_{j+1}(x)$ and $h_{j+2}(x)$ such that $h_j(x) = q_{j+1}(h_{j+1}(x)) + h_{j+2}(x)$ and $\qdeg(h_{j+2})< \qdeg(h_{j+1})$.
\STATE Update $t_{j+2}(x):= t_j (x) - q_{j+1}(t_{j+1}(x))$.
\STATE Set $j:= j+1$.
\ENDWHILE
\RETURN $g_1 := [\: h_j(x) \quad t_j(x)\:]$ and $g_2 :=  [\: h_{j+1}(x) \quad t_{j+1}(x)\:]$
\end{algorithmic}
\end{algorithm}

\begin{ex}\label{ex15}
Consider the Gabidulin code in $\F_{2^3}\cong \F_2[\alpha]$ (with $\alpha^3=\alpha +1$) with generator matrix
\[G= \left( \begin{array}{ccc} 1 & \alpha & \alpha^2 \\ 1 & \alpha^2 & \alpha^4\end{array}\right) \]
and the received word
\[\mathbf{ r} =(\: \alpha +1 \;0 \; \alpha \:) .\]
Then we construct the interpolation module
$$M(\bf r) = \rs \left[\begin{array}{cc}  \Pi(x) & 0 \\ -\Lambda_{\bf r}(x) & x \end{array}\right]  = \rs \left[\begin{array}{cc}  x^8 + x & 0 \\  \alpha^2 x^4+ \alpha^5 x & x \end{array}\right] .$$
To compute a minimal basis we use the Euclidean algorithm and get
\[x^8+x  = (\alpha^3 x^2) \circ ( \alpha^2 x^4+ \alpha^5 x) + \alpha^6 x^2 +  x .\]
Since $\qdeg( \alpha^3 x^2) +k-1 =2 \geq 1=\qdeg( \alpha^6 x^2 +  x)$, the algorithm terminates and a minimal basis (w.r.t. the $(0,1)$-weighted $2$-degree) of this module is
\[ \left[\begin{array}{cc}g_1^{(1)} & g_1^{(2)}\\ g_2^{(1)} &g_2^{(2)}\end{array}\right]  =  \left[\begin{array}{cc} \alpha^2 x^4+ \alpha^5 x & x\\ \alpha^6 x^2 +  x  & \alpha^3 x^2 \end{array}\right] . \]
Hence we get $\ell_1 =2$ and $ \ell_2= 2$, i.e.\ we want to use all $a(x)\in\mathcal{L}_2(x,2^3)$ with $2$-degree less than or equal to $0$ and all monic $b(x)\in\mathcal{L}_2(x,2^3)$  with $2$-degree equal to $0$. Thus, $a(x)= a_0 x$ for $a_0\in\F_{2^3}$ and $b(x)= x$. We get divisibility for $a_0\in\F_{2^3}\backslash \{0\}$. The corresponding message polynomials and codewords are
\[m_1(x) =  x^2 + \alpha x \quad, \quad c_1=(\: \alpha^3 \; 1 \; \alpha^3),\]
\[m_2(x) = \alpha^5 x^2 + \alpha^2 x \quad, \quad c_1=(\: \alpha^3  \; \alpha \; \alpha),\]
\[m_3(x) = \alpha^3 x^2 + \alpha^4 x \quad, \quad c_1=(\: \alpha^2 +1 \; 0 \; \alpha^2),\]
\[m_4(x) = \alpha^4 x^2  \quad,\quad c_3=(\: \alpha^2 +\alpha \; \alpha^2+1 \; \alpha),\]
\[m_5(x) = \alpha^6 x^2 + \alpha^6 x \quad, \quad c_1=(\: 0 \; \alpha^3 \; 1),\]
\[m_6(x) = \alpha x^2 +  x \quad,\quad c_2=(\: \alpha^3 \; 1 \; \alpha^3).\]
All these codewords are rank distance $1$ away from $\bf r$.
\end{ex}

Note that in the previous example all output codewords are only rank distance $1$ away from $\bf r$, but the Hamming distance between them and $\bf r$ can vary between $1,2$ or even $3$.

\section{Conclusion}\label{sec:conclusion}

In this paper we introduced a novel interpolation based decoding algorithm for Gabidulin codes with respect to the rank metric. For this we construct the interpolation module for a given received word and find a minimal basis of this module with respect to the $(0,k-1)$-weighted $q$-degree, utilizing the Euclidean algorithm for composition of linearized polynomials. Then we check the divisibility requirement for certain combinations of the two basis elements to get the list of all closest codewords to that received word. 
To our knowledge the Euclidean algorithm has not been used before to do this type of list decoding for rank-metric Gabidulin decoding.

Future work consists of a detailed complexity analysis; it is anticipated that the method is efficient particularly when the decoding radius is close to the unique decoding radius, such as in one-step ahead decoding cases, illustrated by Example \ref{ex15}.



\bibliographystyle{plain}
\bibliography{margreta_anna-lena}

\end{document}